\begin{document}
\title{Multidimensional $\beta$-skeletons in $L_1$ and $L_{\infty}$ metric
\thanks{This research is supported by the ESF EUROCORES programme EUROGIGA,
CRP VORONOI.}}
\author{
        Miros{\l}aw Kowaluk  
%\inst{1}
            \and
        Gabriela Majewska  
%\inst{1}
%        Institute of Informatics\\
%        University of Warsaw\\
%        Warsaw, Poland
%            \and
%        Evanthia Papadopoulou  \inst{2}
\institute{
Institute of Informatics, University of Warsaw, Warsaw, Poland,\\
\texttt{kowaluk@mimum.edu.pl},
\texttt{gm248309@students.mimuw.edu.pl}
%\and
%Faculty of Informatics, University of Lugano, Lugano, Switzerland,
%\texttt{evanthia.papadopoulou@usi.ch}
}}
\date{\today}

\maketitle

\begin{abstract}
The $\beta$-skeleton $\{G_{\beta}(V)\}$ for a point set V is a family
of geometric graphs, defined by the notion of neighborhoods parameterized 
by real number $0 < \beta < \infty$. 
By using the distance-based version definition of $\beta$-skeletons we study 
those graphs for a set of points in $\mathbb{R}^d$ space with $l_1$ and $l_{\infty}$ 
metrics. We present algorithms for the entire spectrum of $\beta$ values 
and we discuss properties of lens-based and circle-based $\beta$-skeletons 
in those metrics.
%tu dodać info o algorytmach jak już będą spisane
Let $V \in \mathbb{R}^d$ in $L_{\infty}$ metric be a set of $n$ points in general 
position. Then, for $\beta<2$ lens-based $\beta$-skeleton $G_{\beta}(V)$ can 
be computed in $O(n^2 \log^d n)$ time. For $\beta \geq 2$ there exists 
an $O(n \log^{d-1} n)$ time algorithm that constructs $\beta$-skeleton for the set $V$.
We show that in $\mathbb{R}^d$ with $L_{\infty}$ metric, for $\beta<2$ 
$\beta$-skeleton $G_{\beta}(V)$ for $n$ points can be computed in $O(n^2 \log^d n)$ 
time. For $\beta \geq 2$ there exists an $O(n \log^{d-1} n)$ time algorithm.
In $\mathbb{R}^d$ with $L_1$ metric for a set of $n$ points in arbitrary 
position $\beta$-skeleton $G_{\beta}(V)$ can be computed in $O(n^2 \log^{d+2} n)$ time.

%In this paper, we discuss $\beta$-skeletons for line segments
%and present algorithms for the entire spectrum of $\beta$ values.
%For $1 \leq \beta \leq 2$ the well-known sequence of inclusions between  
%Minimum Spanning Tree ($MST$), Gabriel Graph ($GG$, $1$-skeleton), Relative Neighborhood Graph 
%($RNG$, $2$-skeleton) and Delaunay triangulation ($DT$) is preserved.
%The inclusion $G_{\beta}(V) \subseteq G_{\beta'}(V)$ is true for any $\beta > \beta'$. 
%It allows us to construct $\beta$-skeletons for $n$ line segments in $O(n^22^{\alpha(n)} \log n)$ time.
%For $0< \beta <1$, where $\beta$-skeleton is not related to Delaunay triangulation, the algorithm 
%works in $O(n^4)$ time.  
\end{abstract}

\section{Introduction}
 
$\beta$-skeletons \cite{kr85} in $R^2$ belong to the family of
proximity graphs, geometric graphs in which two vertices (points) define an edge 
if and only if they satisfy particular geometric requirements. 
 
$\beta$-skeletons are both important and popular because of many practical applications 
which span a spectrum of areas including geographic information systems, wireless ad hoc 
networks and machine learning. 
They allow to reconstruct a shape of a two-dimensional object from a given set 
of sample points and they are also helpful in finding the minimum weight triangulation 
of a point set. Two different forms of $\beta$-neighborhoods have been studied 
for $\beta > 1$ (see  \cite{abe98,e02}) leading to two different families 
of $\beta$-skeletons: lens-based and circle-based ones. 
  
Well-known examples of lens-based $\beta$-skeletons include {\em Gabriel Graph} 
($1$-beta skeleton), defined by Gabriel and Sokal \cite{gs69} 
and {\em Relative Neighborhood Graph} $RNG$ (for $\beta=2$) which was introduced 
by Toussaint \cite{tou80}.

Many algorithms computing Gabriel graphs and relative neighborhood graphs 
in subquadratic time were created \cite{ms84,su83,jk87,jky89,l94,jt92}. 
Some of them can be also used in metrics different than euclidean ($L_p$ metrics 
for $1<p<\infty$).
%There are very few algorithms for Gabriel graphs and relative neighborhood graphs in metrics different than $L_p$ %metrics for $1<p<\infty$.
In $L_1$ and $L_{\infty}$ metrics the problem of computing $2$-dimensional $RNG$ 
has been adressed by Lee \cite{l85} and O'Rourke \cite{o82}. Wulff-Nilsen \cite{w06} 
presented an algorithm that computes Gabriel graph in the plane with a fixed 
orientation metric. Maignan and Gruau \cite{gm11} defined metric Gabriel graphs 
which can be used in arbitrary metric space.\\
The concept of constructing $\beta$-skeletons in higher-dimensional space is still 
not well explored. Toussaint \cite{tou80} presented an algorithm that computes 
the relative neighborhood graph for a set of $n$ points in $\mathbb{R}^d$ in $O(n^3)$ 
time. Supowit \cite{su83} designed the first subcubic algorithm that computes the $RNG$ 
in $O(n^2)$ but only in case where no three points in the input set form an isosceles 
triangle. His method is based on partitioning the space around each point 
from the given set so that a linear size supergraph of RNG can be constructed. 
From a straightforward edge elimination of this supergraph a relative neighborhood 
graph can be computed. It has been shown that this algorithm extends to $L_p$ metrics 
(see \cite{gbt84,y82}). Based on those ideas Smith \cite{s89} provided an algorithm 
that computes the RNG in $\mathbb{R}^3$ for a set of $n$ points in general position 
in $O(n^{\frac{23}{12}}\log n)$ time. Jaromczyk and Kowaluk \cite{jk87} improved those 
results to obtain an $O(n^2)$ algorithm that works in $L_p, 1<p<\infty$ metrics 
for a set of points in general position. They also provided an algorithm \cite{jk91} 
that computes the relative neighborhood graph for an arbitrary set of $n$ points 
in $\mathbb{R}^3$ in $O(n^2 \log n)$.\\
An algorithm  presented by Katajainen and Nevalainen \cite{kn87} can be used 
in $d$-dimentional spaces in $L_p$ metrics for $1 \leq p \leq \infty$. 
In $3$-dimensional euclidean space it works in $O(n^{\frac{5}{2+\epsilon}})$ time. 

Agarwal and Matou\v{s}ek \cite{am92} used data structures based on the partitioning 
scheme of Chazelle et al. \cite{csw90} and an arrangement of spheres (see \cite{cegsw90}) 
to prove that for a set of $n$ points in general position in $\mathbb{R}^3$ 
the relative neighborhood graph can be constructed in $O(n^{\frac{3}{2}}+\epsilon)$, 
for every $\epsilon > 0$. They also provided an algorithm that computes $RNG$ 
of an arbitrary set of points in $\mathbb{R}^3$ in $O(n^{\frac{7}{4+\epsilon}})$.

O'Rourke \cite{o82} designed an algorithm computing the relative neighborhood graph 
for a set of $n$ points in $L_{\infty}$ in general position in $O(n^2 \log n)$ time. 
Smith \cite{s89} shed that in any $\mathbb{R}^d$ space with $L_{\infty}$ the $RNG$ 
can be constructed in $O(n(\log n)^{d-1})$ time, assuming that the points are 
in general position.

This paper is organized as follows.
The basic properties and definitions for $\beta$-skeletons in $\mathbb{R}^d$ space 
with $L_1$ and $L_{\infty}$ metric are presented in section 2. In section 3 we describe 
algorithms computing $\beta$-skeletons in $L_{\infty}$. In the next section we give 
an algorithm for constructing those graphs in $L_1$ metric.
The last section contains open problems and conclusions.

\section{Preliminaries}
Let $V$ be a set of $n$ points in $\mathbb{R}^d$ space with $L_1$ (or respectively 
$L_{\infty}$) metric.
For any two points $p_1,p_2 \in \mathbb{R}^d$ we define the set $S(p_1,p_2)$ which 
is the sum of all shortest paths between $p_1$ and $p_2$, where point 
$p$  belongs to $S(p_1,p_2)$ if and only if $d_1(p_1,p)+d_1(p,p_2)=d_1(p_1,p_2)$ 
($d_{\infty}(p_1,p)+d_{\infty}(p,p_2)=d_{\infty}(p_1,p_2)$ respectively).

By modifying the definition of lens-based $\beta$ skeletons \cite{kr85} we receive 
a definition based only on a distance criterion:

\begin{definition}
\label{betaskeletons5}
For a given set of points $V$ in $\mathbb{R}^d$ space with a $L_1$ (or $L_{\infty}$) 
metric and for parameters 
$0 \leq \beta \leq \infty$ we define a graph $G_{\beta}(V)$ - called a lens-based 
$\beta$-skeleton - as follows: 
two points $v_1$ and $v_2$ are connected with an edge if and only if  
at least one lens in $N_{1}(v_1,v_2,\beta)$ 
(or $N_{\infty}(v_1,v_2,\beta)$ respectively) does not contain points from 
$V \setminus \{v_1, v_2\}$ where:
\begin{itemize} 
%\item
%for $\beta=0$ a lens $N_{1}(v_1,v_2,\beta)$ 
%(or $N_{\infty}(v_1,v_2,\beta)$ respectively) is equal to the segment $v_1v_2$,
\item
for $0<\beta<1$ a lens $N_{1}(v_1,v_2,\beta)$ (or $N_{\infty}(v_1,v_2,\beta)$ 
respectively) is the intersection of two discs, each of them has radius 
$\frac{d_1(v_1,v_2)}{2 \beta}$ ($\frac{d_{\infty}(v_1,v_2)}{2 \beta}$ respectively), 
whose boundaries contain both $v_1$ and $v_2$ and each shortest
path connecting their centers intersects $S(v_1,v_2)$,
%some shortest path between $v_1$ and $v_2$, 
\item
for $1 \leq \beta<\infty$ a lens $N_{1}(v_1,v_2,\beta)$ (or $N_{\infty}(v_1,v_2,\beta)$, 
respectively) is the intersection of two discs with radius 
$\frac{\beta d_{1}(v_,v_2)}{2}$ ($\frac{\beta d_{\infty}(v_,v_2)}{2}$, respectively) 
and with centers $c_1$ and $c_2$, respectively, such that 
$d_1(v_1, c_2)=d_1(v_2,c_1)=|\frac{(\beta -2) d_1(v_1v_2)}{2}|$, 
$d_1(v_1, c_1)=d_1(v_2,c_2)=|\frac{\beta d_1(v_1v_2)}{2}|$ 
($d_{\infty}(v_1, c_2)=d_{\infty}(v_2,c_1)=|\frac{(\beta -2) d_{\infty}(v_1v_2)}{2}|$, 
$d_{\infty}(v_1, c_1)=d_{\infty}(v_2,c_2)=|\frac{\beta d_{\infty}(v_1v_2)}{2}|$ 
in $L_{\infty}$ metric) and each shortest path connecting $c_1$ and $c_2$ intersects
$S(v_1,v_2)$. 
%some shortest path between $v_1$ and $v_2$,
%\item
%for $\beta=\infty$, $N_{1}(v_1,v_2,\beta)$ ($N_{\infty}(v_1,v_2,\beta)$ respectively) 
%is the unbounded strip between planes 
%%perpendicular to the segment $v_1v_2$ 
%parallel to each pair of coordinate axes and containing $v_1$ and $v_2$.

%\item
%for  $\beta=0$ ($\beta=\infty$, respectively) a lens $N_d(v_1,v_2,0) \in N_d(s_1,s_2,0)$, 
%($N_d(v_1,v_2,\infty) \in N_d(s_1,s_2,\infty)$, respectively), where $v_1 \in s_1$ 
%and $v_2 \in s_2$, we have  \\
%$N_d(v_1,v_2,0)=\lim_{\beta \rightarrow 0} N_d(v_1,v_2, \beta)$ and 
%$N_d(v_1,v_2, \infty)=\lim_{\beta \rightarrow \infty} N_d(v_1,v_2, \beta)$.
 
\end{itemize}
\end{definition}

Note that the shortest path intersection condition from the definition 
for $0< \beta <1$ and for $1 \leq \beta<\infty$ ensures that centers of the discs 
defining the lens lie antipodically in relation to the set $S(v_1,v_2)$.
%segment $v_1v_2$.

The definition \cite{e02} of the circle-based $\beta$-skeletons can also be modified 
in the similar way:\\

\begin{definition}
\label{circle-based} \cite{e02}
For a given set of points $V$ in $\mathbb{R}^d$ space with a $L_1$ (or $L_{\infty}$) 
metric and for parameters $0 \leq \beta < \infty$ we define 
a graph $G_{\beta}(V)$ - called a circle-based $\beta$-skeleton -
as follows: 
two points $v_1$ and $v_2$ are connected with an edge if and only if  
at least one lens in $N_{1}^c(v_1,v_2,\beta)$ (or $N_{\infty}^c(v_1,v_2,\beta)$, 
respectively) does not contain points from $V \setminus \{v_1, v_2\}$ where:
\begin{enumerate}
\item
for $\beta <1$ we define set $N_{1}^c(v_1,v_2,\beta)$ (or $N_{\infty}^c(v_1,v_2,\beta)$ 
respectively) the same way as for lens-based $\beta$-skeleton;
\item
for $1 \leq \beta<\infty$ a lens $N_{1}^c(v_1,v_2,\beta)$ 
(or $N_{\infty}^c(v_1,v_2,\beta)$ respectively) is the sum of two discs, each of them 
has radius $\frac{\beta d_1(v_1,v_2)}{2}$ ($\frac{\beta d_{\infty}(v_1,v_2)}{2}$, 
respectively), whose boundaries contain both $v_1$ and $v_2$ and each shortest path 
connecting their centers intersects $S(v_1,v_2)$. 

%\item
%for $\beta=\infty$, $N_{1}^c(v_1,v_2,\beta)$ (or $N_{\infty}^c(v_1,v_2,\beta)$ respectively) is a union of the %segment $v_1v_2$ and two open halfspaces defined by the line passing by $v_1$ and $v_2$.
%jaka tu definicja?
\end{enumerate} 
\end{definition}

In our further considerations we will use the following lemma:

\begin{lemma}
\label{cuboid}
Let $C^d_1(p,r)$ ($C^d_{\infty}(p,r)$, respectively) denote a sphere in $\mathbb{R}^d$
space with $L_1$ ($L_{\infty}$, respectively) metric centered in a point $p$ 
and with the radius $r$.
In $d$-dimensional $L_{\infty}$ space, for two points $v_1,v_2$ 
%the sum of all shortest paths between those two points is equal to the sum 
$S(v_1,v_2) = \bigcup_{0\leq r \leq d_i(v_1,v_2)} C^d_i(v_1,r) \cap 
C^d_i(v_2,d_i(v_1,v_2)-r)$, where $i = 1, \infty$.
\end{lemma}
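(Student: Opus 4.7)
The plan is to prove this set equality by establishing both inclusions, noting that the lemma is essentially an unwinding of the definition of $S(v_1,v_2)$: membership in $S(v_1,v_2)$ is defined by an additive distance condition, and the right-hand side simply stratifies all such points by their distance to $v_1$.

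For the inclusion $S(v_1,v_2) \subseteq \bigcup_{r} C^d_i(v_1,r) \cap C^d_i(v_2, d_i(v_1,v_2)-r)$, I would take an arbitrary $p \in S(v_1,v_2)$. By the definition given in the preliminaries, $d_i(v_1,p)+d_i(p,v_2)=d_i(v_1,v_2)$. Setting $r := d_i(v_1,p)$, this $r$ clearly lies in $[0,d_i(v_1,v_2)]$, and $d_i(p,v_2) = d_i(v_1,v_2)-r$. Hence $p$ lies on both spheres $C^d_i(v_1,r)$ and $C^d_i(v_2,d_i(v_1,v_2)-r)$, placing it in the union.

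The reverse inclusion is even more immediate: if $p \in C^d_i(v_1,r) \cap C^d_i(v_2, d_i(v_1,v_2)-r)$ for some admissible $r$, then $d_i(v_1,p) = r$ and $d_i(p,v_2) = d_i(v_1,v_2)-r$ by the definition of a sphere (the set of points at a fixed distance from the center). Adding these two equalities yields $d_i(v_1,p)+d_i(p,v_2) = d_i(v_1,v_2)$, so $p \in S(v_1,v_2)$ by definition. Both inclusions hold simultaneously for $i=1$ and $i=\infty$ because neither argument invokes anything metric-specific beyond the axioms guaranteeing that $d_i$ is a metric and that spheres $C^d_i$ are its level sets.

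There is not really a hard step here; the lemma is more a reformulation than a theorem. The one point to state clearly is that nothing in the union is forced to be empty: for every $r \in [0,d_i(v_1,v_2)]$, there is at least one point on a shortest path (e.g., interpolating coordinatewise in the $L_\infty$ case between $v_1$ and $v_2$) whose distance from $v_1$ is $r$, so the intersection $C^d_i(v_1,r) \cap C^d_i(v_2,d_i(v_1,v_2)-r)$ is nonempty. This observation makes the parameterization faithful and confirms that the union really sweeps out the entire shortest-path set $S(v_1,v_2)$, which in both the $L_1$ and $L_\infty$ metrics is the axis-aligned box spanned by $v_1$ and $v_2$ (in $L_\infty$) or the analogous region of coordinatewise monotone paths (in $L_1$).
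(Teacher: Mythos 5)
Your proof is correct and follows essentially the same two-inclusion argument as the paper: take $r=d_i(v_1,p)$ for one direction and add the two sphere equalities for the other. The only flaw is in your closing aside, which is not needed for the lemma anyway: you have the metrics reversed, since $S(v_1,v_2)$ is the axis-aligned box in $L_1$, while in $L_\infty$ it is a polyhedron with edges parallel to the diagonals of the unit cube.
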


\begin{proof}
It is obvious that for each point $p \in \bigcup_{0\leq r \leq d_i(v_1,v_2)} 
C^d_i(v_1,r) \cap C^d_i(v_2,d_i(v_1,v_2)-r)$, where $i = 1, \infty$ there is 
$d_i(v_1,v_2) = d_i(v_1,p) + d_i(p,v_2)$, i.e. $p \in S(v_1,v_2)$.
If $p \in S(v_1,v_2)$, there exists $r = d_i(v_1,p)$, where $i = 1, \infty$,
such that $d_i(p,v_2) = d_i(v_1,v_2)-r$. Then $p \in C^d_i(v_1,r) \cap 
C^d_i(v_2,d_i(v_1,v_2)-r)$, i.e. $p \in \bigcup_{0\leq r \leq d_i(v_1,v_2)} 
C^d_i(v_1,r) \cap C^d_i(v_2,d_i(v_1,v_2)-r)$.  
\end{proof}

%\textbf{Now, for points $v_1,v_2 \in V$ let $S(v_1,v_2)$ be the set defined 
%in the Lemma \ref{cuboid}. $S(v_1,v_2)$ is a polyhedron with four pairs of pairwise 
%parallel faces. For each pair of two parallel faces $f_1$ and $f_2$ we define a stripe 
%as the set of all points between two parallel planes containing $f_1$ and $f_2$ 
%respectively. The intersection of two such stripes for each two pairs of such faces 
%we will call \textit{an arm}. Let $T(v_1,v_2)$ be the sum of set $S(v_1,v_2)$ and all 
%arms defined by its faces. This cross $T(v_1,v_2)$ has eight arms and their directions 
%are defined by diagonals of a $3$-dimensional cube (two arms for each diagonal). 
%Note that for each arm the cross section is a parallelogram.}
%rysunek 

Let $s^d_k$ be a line in $\mathbb{R}^d$ with $L_{\infty}$ metric containing  
$k$-th diagonal of the cube $[0,1]^d$, where $k \in \{1, \cdots, 2^{d-1}$. 
Then $S(v_1,v_2)$ is a polyhedron whose edges are parallel to some line $s^d_k$ 
(see Figure \ref{fig:pajak}). Let $T_k(v_1,v_2)$ be a sum of all lines parallel
to $s^d_k$ intersecting $S(v_1,v_2)$ and $T(v_1,v_2) = \bigcup_k T_k(v_1,v_2)$.
$T(v_1,v_2)$ is a cross with $2^d$ arms (see Figure \ref{fig:pajak}). 

\begin{figure}[htbp]
\centering
\includegraphics[scale=0.4]{./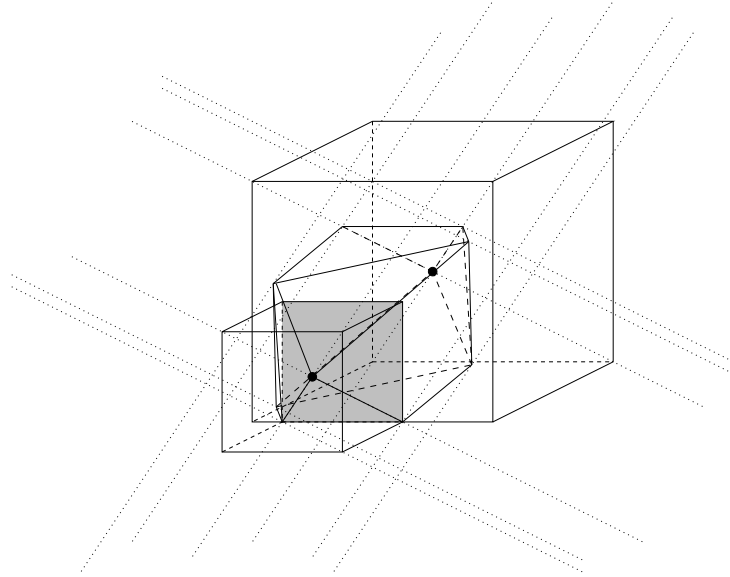}
\caption{Intersection of two spheres in $L_{\infty}$ (grey square), the set $S(v_1,v_2)$
(lines in bold) and two sets $T_{k_1}(v_1,v_2)$ and $T_{k_2}(v_1,v_2)$ (dotted lines).}
\label{fig:pajak}
\end{figure}

%In some cases set $S(v_1,v_2)$ might deform to a rectangle or to a segment. 
%If $S(v_1,v_2)$ is not deformed then set $T(v_1,v_2)$ has six arms and each arm 
%is bound by four planes containing respective faces of the cuboid from the 
%Lemma \ref{cuboid2}. If $S(v_1,v_2)$ is a segment (happens when segment $v_1v_2$ 
%is parallel to coordinate axis), then cross $T(v_1,v_2)$ has arms that are half-lines 
%and $2$-dimensional sets. When points $v_1,v_2$ have exactly one equal coordinate 
%then $S(v_1,v_2)$ is a rectangle and $T(v_1,v_2)$ has four arms that are 
%$2$-dimensional sets. 

In $L_1$ metric the set $S(v_1,v_2)$ is a cuboid (at most $d$-dimensional).
It is a center of the cross $T(v_1,v_2)$ whose each of $d$ arms $T_k(v_1,v_2)$ is
parallel to some coordinate axis (see Figure \ref{fig:krzyz}).

\begin{figure}[htbp]
\centering
\includegraphics[scale=0.4]{./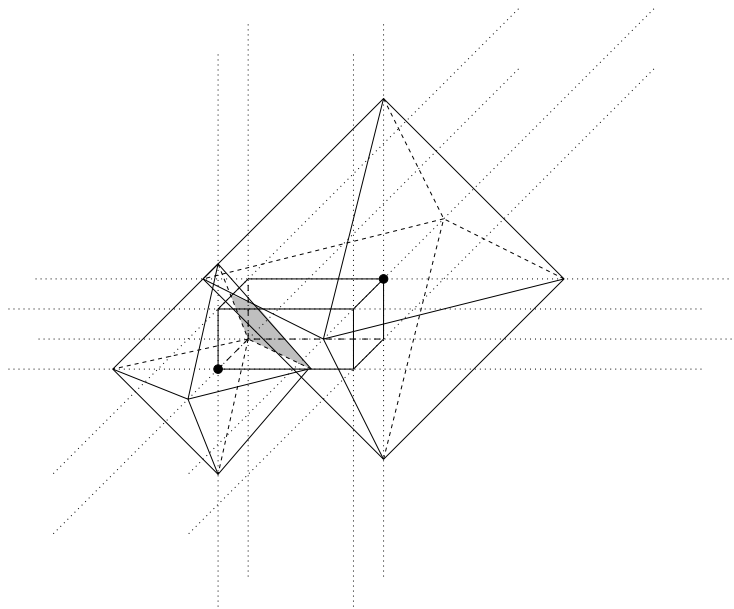}
\caption{Intersection of two spheres in $L_{1}$ (grey area), the set $S(v_1,v_2)$
(lines in bold) and all sets $T_{k_i}(v_1,v_2)$, where $i \in \{1,2,3\}$ (dotted lines).}
\label{fig:krzyz}
\end{figure}

\begin{lemma}\label{intersections}
For $0<\beta<1$ (and for $\beta>1$ for circle-based $\beta$-skeletons) and points 
$v_1,v_2$, let $c_1$, $c_2$ be such points that 
$d_1(c_1,v_1)=d_1(c_2,v_1)=d_1(c_1,v_2)=d_1(c_2,v_2)=\frac{d_{1}(v_1,v_2)}{2 \beta}$ 
($d_{\infty}(c_1,v_1)=d_{\infty}(c_2,v_1)=d_{\infty}(c_1,v_2)=d_{\infty}(c_2,v_2)=
\frac{d_{\infty}(v_1,v_2)}{2 \beta}$, respectively). Then $c_1$ and $c_2$ define 
a lens $N_1(v_1,v_2,\beta)$ ($N_{\infty}(v_1,v_2,\beta)$, respectively) for $v_1,v_2$ 
if and only if both $c_1$ and $c_2$ belong to the set $T_k(v_1,v_2)$ for some
direction $k \in \{1, \cdots, 2^{d-1}\}$ ($k \in \{1, \cdots, d\}$, respectively). 
%and either 
%at least one of them belongs to the set $S(v_1,v_2)$ or they lie in the opposite 
%arms of $T(v_1,v_2)$ respectively.
\end{lemma}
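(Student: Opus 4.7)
The plan is to prove the $L_\infty$ case in detail and note that the $L_1$ case is analogous with coordinate axes playing the role of the diagonal directions $s^d_k$ (in $L_1$, $S(v_1,v_2)$ is an axis-aligned cuboid, so its cross $T$ has arms parallel to coordinate axes rather than diagonals). Throughout, I would exploit the equidistance hypothesis together with Lemma~\ref{cuboid}: both centers lie on the common boundary of the $L_\infty$-balls of radius $r = d_\infty(v_1,v_2)/(2\beta)$ around $v_1$ and $v_2$, and since $\beta<1$ (or circle-based with $\beta>1$) gives $r > d_\infty(v_1,v_2)/2$, this boundary consists of $(d-1)$-dimensional faces of a full-dimensional box, each face perpendicular to some coordinate axis.

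For direction ($\Leftarrow$), suppose $c_1,c_2 \in T_k$ for some $k$. By the definition of $T_k$, each line parallel to $s^d_k$ through $c_1$ or $c_2$ meets $S(v_1,v_2)$; together with the equidistance relation this forces $c_1$ and $c_2$ to sit antipodally across $S(v_1,v_2)$ along direction $s^d_k$. I would then show that $S(c_1,c_2)$ contains a ``waist'' cross-section transverse to $s^d_k$ whose admissible $(d-1)$-dimensional extent lies inside the corresponding slice of $S(v_1,v_2)$. Because any shortest path from $c_1$ to $c_2$ in $L_\infty$ progresses monotonically through $S(c_1,c_2)$, every such path must pass through this waist, hence through $S(v_1,v_2)$.

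For direction ($\Rightarrow$), I would argue contrapositively. If no $T_k$ contains both $c_1$ and $c_2$, then the points lie either in distinct arms of the cross $T(v_1,v_2)$ or entirely outside $T(v_1,v_2)$. In either case I would explicitly construct a staircase shortest path from $c_1$ to $c_2$, of total length $d_\infty(c_1,c_2)$, that detours around $S(v_1,v_2)$ through the complementary region. Such a detour exists because in each level-set slice transverse to $c_2-c_1$ the admissible range within $S(c_1,c_2)$ properly contains the corresponding slice of $S(v_1,v_2)$ on at least one side, allowing a continuous choice of path coordinates that stays outside $S(v_1,v_2)$.

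The main obstacle will be the forward direction: verifying that \emph{every} shortest path (not only some) crosses $S(v_1,v_2)$. In $L_\infty$ the shortest-path region between two points is a full-dimensional parallelepiped, so this step requires identifying a level-set hyperplane at which the admissible slice in $S(c_1,c_2)$ is entirely contained in $S(v_1,v_2)$---a calculation that depends on which faces of $C^d_\infty(v_1,r)$ and $C^d_\infty(v_2,r)$ contain $c_1$ and $c_2$, correlated with the direction $s^d_k$.
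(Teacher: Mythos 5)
Your plan takes a genuinely different route from the paper, and a more ambitious one: the paper proves only the ``only if'' direction, by a three-line separation argument --- since by definition every shortest path from $c_1$ to $c_2$ meets $S(v_1,v_2)$, the region $S(c_1,c_2)$ is cut by $S(v_1,v_2)$ into two pieces, hence some edge direction $s^d_k$ of $S(c_1,c_2)$ has its arm set $T_k(c_1,c_2)$ cut by $S(v_1,v_2)$, and the containment $T_k(c_1,c_2)\subset T_k(v_1,v_2)$ then forces $c_1,c_2\in T_k(v_1,v_2)$; the converse implication is not argued in the paper at all. You instead propose a contrapositive path construction for that direction and a ``waist'' argument for the converse. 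In principle this would yield a more complete proof, but as written it has concrete gaps. First, in your ($\Leftarrow$) direction the claim that equidistance from $v_1$ and $v_2$ together with membership in $T_k$ ``forces $c_1$ and $c_2$ to sit antipodally across $S(v_1,v_2)$'' is unjustified, and it is exactly the delicate point: the set $C^d_\infty(v_1,r)\cap C^d_\infty(v_2,r)\cap T_k(v_1,v_2)$ can contain short segments inside a single arm (the paper says as much in its discussion of $\beta<1$), so two equidistant points of $T_k$ may lie on the same side of $S(v_1,v_2)$, in which case there is no waist and they do not define a lens. You must either prove antipodality or treat that case separately. Second, the decisive computation --- exhibiting a level $r_0$ with $C^d_\infty(c_1,r_0)\cap C^d_\infty\bigl(c_2,d_\infty(c_1,c_2)-r_0\bigr)\subseteq S(v_1,v_2)$ --- is explicitly deferred to ``a calculation that depends on which faces contain $c_1$ and $c_2$,'' so the containment is asserted rather than proved.

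Third, in the contrapositive direction the existence of an avoiding shortest path does not follow from the slice condition you state: knowing that each transverse slice of $S(c_1,c_2)$ sticks out of $S(v_1,v_2)$ ``on at least one side'' does not let you concatenate these escapes into a single monotone staircase, because the available side may flip from one slice to the next while a shortest path cannot backtrack in any coordinate. You need a uniform escape (for instance, a face or corner strip of $S(c_1,c_2)$ disjoint from $S(v_1,v_2)$ joining $c_1$ to $c_2$), which again requires the face-by-face case analysis you postpone. Until these three points are filled in, the proposal is a roadmap rather than a proof; the paper's own argument, though terse and one-directional, at least closes its direction via the arm-containment observation $T_k(c_1,c_2)\subset T_k(v_1,v_2)$, which your write-up does not exploit.
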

\begin{proof}
Because a distance between $c_1$ and $c_2$ is greater than $d_1(v_1,v_2)$ 
($d_{\infty}(v_1,v_2)$, respectively) then $c_1, c_2 \notin S(v_1,v_2)$.
From the definition of $\beta$-skeleton it follows that $S(c_1,c_2)$ is divided
by $S(v_1,v_2)$ into two separate parts. Hence, there exists an edge direction
of $S(c_1,c_2)$ (e.g. $s^d_k$) that $S(v_1,v_2)$ divides $T_k(c_1,c_2)$.
Hence $T_k(c_1,c_2) \subset T_k(v_1,v_2)$, i.e. $c_1,c_2 \in T_k(v_1,v_2)$.
    
%From Lemma \ref{cuboid} and Lemma \ref{cuboid2} if points $c_1,c_2$ satisfy conditions 
%of the Lemma above then the sum of all shortest paths between $c_1$ and $c_2$ lies 
%inside the set $T(v_1,v_2)$. If those points lie in opposite arms of $T(v_1,v_2)$ then 
%each shortest path between $c_1$ and $c_2$ intersects the set $S(v_1,v_2)$. The same 
%thing happens if either $c_1$ or $c_2$ lies in set $S(v_1,v_2)$.
\end{proof}

\section{$\beta$-skeletons in $L_{\infty}$ metric}

We will concentrate on $3$-dimensional space to better describe the solution.
Let $V$ be a set of $n$ points in $3$-dimensional $L_{\infty}$ space.
Let $v_1 = (x_1,y_1,z_1),v_2=(x_2,y_2,z_2)$.

%rysunek zbioru C_1 u C_2 dla beta<1, to ta krzywa idąca po brzegu odpowiedniego prostopadloscianu

For $\beta=1$ and for points $v_1$ and $v_1$ let $c_1$ and $c_2$ be the centers 
of the discs that define a lens for $v_1,v_2$. 
Depending on how many numbers 
from $|x_1-x_2|,|y_1-y_2|,|z_1-z_2|$ is equal to the distance $d_{\infty}(v_1,v_2)$ 
the intersection of spheres $C(v_1,\frac{d_{\infty}(v_1,v_2)}{2})$ and 
$C(v_2,\frac{d_{\infty}(v_1,v_2)}{2})$ can be a rectangle, a segment or a point.

%If $d_{\infty}(v_1,v_2)=|x_1-x_2|>|y_1-y_2|,|z_1-z_2|$ (the cases 
%when $d_{\infty}(v_1,v_2)=|y_1-y_2|>|x_1-x_2|,|z_1-z_2|$, 
%$d_{\infty}(v_1,v_2)=|z_1-z_2|>|x_1-x_2|,|y_1-y_2|$ are identical) then this 
%intersection is a rectangle that lies in the plane $\{x=\frac{x_1+x_2}{2}\}$ 
%(in the plane $\{y=\frac{y_1+y_2}{2}\}$, $\{z=\frac{z_1+z_2}{2}\}$ respectively).

For each point 
$v \in C(v_1,\frac{d_{\infty}(v_1,v_2)}{2}) \cap C(v_2,\frac{d_{\infty}(v_1,v_2)}{2})$ 
there is 
$d_{\infty}(v,v_1)+d_{\infty}(v,v_2)=d_{\infty}(v_1,v_2)$, 
so this all such points belong to $S(v_1,v_2)$.

%If $d_{\infty}(v_1,v_2)=|x_1-x_2|=|y_1-y_2|>|z_1-z_2|$ 
%($d_{\infty}(v_1,v_2)=|z_1-z_2|=|x_1-x_2|>|y_1-y_2|$, 
%$d_{\infty}(v_1,v_2)=|z_1-z_2|=|y_1-y_2|>|x_1-x_2|$ respectively) then the intersection 
%of spheres $C(v_1,d_{\infty}(v_1,v_2))$ and $C(v_2,d_{\infty}(v_1,v_2))$ is 
%a segment parallel to the $Z$-axis ($Y$-axis and $X$-axis respectively). 
%Note that if point $v=(x,y,z)$ belongs to this intersection then 
%$|x-x_1|=|x-x_2|=|y-y_1|=|y-y_2|=\frac{1}{2}d_{\infty}(v_1,v_2)$ and 
%$z \in [max\{z_1-\frac{1}{2}d_{\infty}(v_1,v_2),z_2-\frac{1}{2}d_{\infty}(v_1,v_2)\},
%min\{z_1+\frac{1}{2}d_{\infty}(v_1,v_2),z_2+\frac{1}{2}d_{\infty}(v_1,v_2)\}]$. 
%If $|x_1-x_2|=|y_1-y_2|=|z_1-z_2|$ then spheres $C(v_1,d_{\infty}(v_1,v_2))$ 
%and $C(v_2,d_{\infty}(v_1,v_2))$ intersect in the middle point of the segment $v_1v_2$.

Let $c_1,c_2$ be two points from the intersection set such that the distance 
between them is maximal. In each case there is at most two pairs of points like that
(in rectangle both pairs of such points define the same lens). 
Then the lens defined by $c_1$ and $c_2$ is minimal.

For a given $\beta \in(1,2)$, let $C_1$ ($C_2$ respectively) be the intersection 
of spheres $C(v_1,\frac{\beta d_{\infty}(v_1,v_2)}{2})$ and 
$C(v_2,(1 -\frac{\beta}{2})d_{\infty}(v_1,v_2))$ 
($C(v_2,\frac{\beta d_{\infty}(v_1,v_2)}{2})$ and 
$C(v_1,(1 -\frac{\beta}{2})d_{\infty}(v_1,v_2))$ respectively).  
Note that just like for $\beta=1$ set $C_1$ ($C_2$ respectively) can be a rectangle, 
a segment or a point depending on how many numbers from $|x_1-x_2|,|y_1-y_2|,|z_1-z_2|$ 
is equal to the distance $d_{\infty}(v_1,v_2)$. Sets $C_1$ and $C_2$ are symmetrical 
and parallel to each other and also parallel to one or more of the $\{x=0\},\{y=0\}$ 
or $\{z=0\}$ planes.

Note that since for any point $v$ in the set 
$C(v_1,\frac{\beta d_{\infty}(v_1,v_2)}{2}) \cap 
C(v_2,(1 -\frac{\beta}{2})d_{\infty}(v_1,v_2))$ 
($C(v_2,\frac{\beta d_{\infty}(v_1,v_2)}{2}) \cap 
C(v_1,(1 -\frac{\beta}{2})d_{\infty}(v_1,v_2))$ respectively) we have 
$d_{\infty}(v,v_1)+d_{\infty}(v,v_2)=
\frac{\beta d_{\infty}(v_1,v_2)}{2}+(1 -\frac{\beta}{2})d_{\infty}(v_1,v_2)=
d_{\infty}(v_1,v_2)$ and $\frac{\beta d_{\infty}(v_1,v_2)}{2} \leq d_{\infty}(v_1,v_2)$. 
Hence, each point from the set $C_1$ ($C_2$ respectively) also belongs to the set 
$S(v_1,v_2)$. 
A pair $c_1 \in C_1, c_2 \in C_2$ defines the minimal lens 
if the distance between them is maximal. In some cases there can be more than one 
such pair, but all those pairs define the same lens.

%Let us consider the case when $|x_1-x_2|>|y_1-y_2|,|z_1-z_2|$. Then, sets $C_1$ and 
%$C_2$ are symmetrical rectangles parallel to plane $\{x=0\}$. The lens defined by 
%any two centers $c_1 \in C_1, c_2 \in C_2$ has two parallel faces contained 
%in the planes $\{x=x_1\}$ and $\{x=x_2\}$ respectively.
%Let $c_1 \in C_1, c_2 \in C_2$ be such a pair that 
%$d_2(c_1,c_2)=max_{a_1 \in C_1, a_2 \in C_2}d_2(a_1,a_2)$. Since $C_1$ and 
%$C_2$ are rectangles with edges parallel to coordinates axes, then 
%$|y(c_1)-y(c_2)|=max_{a_1 \in C_1, a_2 \in C_2}|y(a_1)-y(a_2)|$ and 
%$|z(c_1)-z(c_2)|=max_{a_1 \in C_1, a_2 \in C_2}|z(a_1)-z(a_2)|$, so the pair 
%$c_1,c_2$ defines the smallest lens that is contained in lenses for all other pairs. 
%Note that this smallest lens is uniquely defined and equal to the $3$-dimensional 
%rectangular box with faces parallel to $\{x=0\},\{y=0\}$ or $\{z=0\}$ planes 
%and with diagonal $v_1v_2$ and that this fact can be proved for any pair of points 
%$v_1,v_2$.

%If $d_{\infty}(v_1,v_2)=|x_1-x_2|=|y_1-y_2|>|z_1-z_2|$ 
%($d_{\infty}(v_1,v_2)=|z_1-z_2|=|x_1-x_2|>|y_1-y_2|$, 
%$d_{\infty}(v_1,v_2)=|z_1-z_2|=|y_1-y_2|>|x_1-x_2|$ respectively) then sets $C_1$ 
%and $C_2$ are segments and again the minimal lens is defined by such a pair 
%$c_1 \in C_1, c_2\in C_2$ that maximizes the euclidean distance.

%In the last case, when $|x_1-x_2|=|y_1-y_2|=|z_1-z_2|$, sets $C_1$ and $C_2$ both 
%contain only one point, so there exists only one lens for $v_1,v_2$.
%%rysunek zbiorów C_1 C_2 w tej sytuacji

If $\beta=2$, the lens for points $v_1,v_2$ is defined uniquely. 
%From \cite{s89}, if the points from the set $V$ are in general position then the $RNG$ 
%for this set can be computed in $O(n(\log n)^2)$ time.

For $0<\beta<1$ in case of the lens-based $\beta$-skeleton (or for $\beta>1$ for 
circle-based ones), all possible centers of the discs defining the lenses belong 
to the intersection of two spheres $C(v_1,\frac{d_{\infty}(v_1,v_2)}{2 \beta})$ 
and $C(v_2,\frac{d_{\infty}(v_1,v_2)}{2 \beta})$. If segment $v_1v_2$ is not parallel 
to a coordinate axis, then this intersection is a curve composed of six edges 
of the cuboid    
$D(v_1,\frac{d_{\infty}(v_1,v_2)}{2 \beta}) \cap D(v_2,\frac{d_{\infty}(v_1,v_2)}
{2 \beta})$, where $D(p,r) = \{q \in \mathbb{R}^3: d_{\infty}(p,q) \leq r\}$ . 
In other case, the intersection is not a curve, but a band.

Note that for $\beta=1$ the intersection of spheres that defines centers 
of lenses is a rectangle that is equal to a cross-section of the set $S(v_1,v_2)$. 
For $1>\beta \rightarrow 0$ (or $1 < \beta \rightarrow \infty$ for 
circle-based skeltons)vertices of this rectangle travel away from 
the set $S(v_1,v_2)$ along the lines parallel to diagonals of a $3$-dimensional cube. 
For $\beta=1$ vertices of the curve 
$C(v_1,\frac{d_{\infty}(v_1,v_2)}{2 \beta}) \cap C(v_2,\frac{d_{\infty}(v_1,v_2)}
{2 \beta})$ are in the set $S(v_1,v_2)$ and that for $\beta \rightarrow 0$ 
($1 < \beta \rightarrow \infty$, respectively) they move 
further away from $S(v_1,v_2)$ inside arms of the set $T(v_1,v_2)$. Therefore, 
the intersection of the curve 
$C(v_1,\frac{d_{\infty}(v_1,v_2)}{2 \beta}) \cap C(v_2,\frac{d_{\infty}(v_1,v_2)}
{2 \beta})$ and set $T(v_1,v_2)$ are respective vertices of the cuboid 
$D(v_1,\frac{d_{\infty}(v_1,v_2)}{2 \beta}) \cap D(v_2,\frac{d_{\infty}(v_1,v_2)}
{2 \beta})$ along with short segments, parts of the intersection curve, incident 
to those vertices. If $c_1,c_2$ are endpoints of a diagonal of the cuboid 
$D(v_1,\frac{d_{\infty}(v_1,v_2)}{2 \beta}) \cap D(v_2,\frac{d_{\infty}(v_1,v_2)}
{2 \beta})$ and they are both in the curve 
$C(v_1,\frac{d_{\infty}(v_1,v_2)}{2 \beta}) \cap C(v_2,\frac{d_{\infty}(v_1,v_2)}
{2 \beta})$ then the lens they define is minimal.

For $\beta>1$ in the case of circle-based skeleton we compute the pairs of closest
centers $c_1$ and $c_2$. It can be done in constant time. We also test emptiness 
of both spheres centered in $c_1$ and $c_2$ instead of only their intersection.

%Now, according to corollary \ref{intersections} a pair $c_1, c_2$ from the set described above defines a lens if an only if they both belong to the set $T(v_1,v_2)$ and either at least one of them belongs to the cuboid (octahedron respectively) for $v_1$ and $v_2$ or if they lie in the opposite arms of $T(v_1,v_2)$ respectively.\\
%Note that for $v_1,v_2$ such that segment $v_1v_2$ is not parallel to a coordinate axis there exists a $0<\beta_0<1$ such that for $\beta_0 \leq \beta \leq 1$ each edge of the curve is intersected by the set $T(v_1,v_2)$. For $\beta_0>\beta \rightarrow 0$ the rectangular box that contains the intersection curve grows bigger, the shape of the curve stays the same, but the points of the intersection of the curve and the arms of the set $T(v_1,v_2)$ move along straight lines. Note that the bigger the euclidean distance between $c_1$ and $c_2$ the smaller the lens, therefore  from corollary \ref{intersections} there are up to three minimal lenses for each $0<\beta <1$. For every $0<\beta<1$ every lens contains the rectangular box with diagonal $v_1v_2$ and with faces parallel to respective $\{x=0\},\{y=0\},\{z=0\}$ planes.\\

If segment $v_1v_2$ is parallel to an axis then the intersection of the band that 
contains the potential centers of the spheres that define lenses and the arms 
of the set $T(v_1,v_2)$ is a sum of eight segments, four on each side of the band.

%Let $RNG(V)$ be a $2$-skeleton defined for the set $V \in \mathbb{R}^3$ 
%in $L_2$ metric. For $\beta \geq 2$ if $G_{\beta}(V)$ is a $\beta$-skeleton 
%in $L_{\infty}$ metric, then $G_{\beta}(V) \subseteq RNG(V)$, since for any 
%points $p_1,p_2$ the euclidean distance between those points is no less then 
%$d_{\infty}(v_1,v_2)$.

For $\beta>2$ let us 
consider the set $C_1$ ($C_2$ respectively) which is the intersection of spheres 
$C(v_1,\frac{\beta d_{\infty}(v_1,v_2)}{2})$ and 
$C(v_2,(1 -\frac{\beta}{2})d_{\infty}(v_1,v_2))$ 
($C(v_2,\frac{\beta d_{\infty}(v_1,v_2)}{2})$ and 
$C(v_1,(\frac{\beta}{2}-1)d_{\infty}(v_1,v_2))$ respectively). Note that since 
$(\frac{\beta}{2}-1)d_{\infty}(v_1,v_2)+d_{\infty}(v_1,v_2)=
\frac{\beta}{2}d_{\infty}(v_1,v_2)$ then spheres 
$C(v_1,\frac{\beta d_{\infty}(v_1,v_2)}{2})$ and 
$C(v_2,(1 -\frac{\beta}{2})d_{\infty}(v_1,v_2))$ 
($C(v_2,\frac{\beta d_{\infty}(v_1,v_2)}{2})$ and 
$C(v_1,(\frac{\beta}{2}-1)d_{\infty}(v_1,v_2))$ respectively) are tangent internally 
and the sphere with the smaller radius is contained in the bigger one. From this 
we get that the set $C_1$ ($C_2$ respectively) is a sum of up to three respective 
faces of the sphere with the smaller radius, depending on the number of coordinates 
that define the distance $d_{\infty}(v_1,v_2)$. Intersection of $C_1$ 
($C_2$ respectively) and $T(v_1,v_2)$ is a subset of vertices of $C_1$ 
($C_2$ respectively).

%rysunek stycznosci tutaj?
%Also in this case let us consider a pair $c_1 \in C_2, c_2 \in C_2$ such that $d_2(c_1,c_2)=max_{a_1 \in C_1, a_2 \in C_2}d_2(a_1,a_2)$. From symmetry, $c_1$ and $c_2$ lie on a pair of parallel squares. Those two squares define a cuboid and segment $c_1c_2$ is the longer diagonal of this cuboid. In the same way as for $\beta \in (1,2)$, we can show that pair $c_1,c_2$ defines the smallest lens for the pair $v_1,v_2$ and that this lens is the $3$-dimensional rectangular box with faces parallel to $\{x=0\},\{y=0\}$ or $\{z=0\}$ planes and with diagonal $v_1v_2$.
%\textbf{Tutaj rozpisac przeciecia z krzyzakiem}

Based on the observations in this section the following theorem can be proved:

\begin{theorem}
Let $V \in \mathbb{R}^3$ in $L_{\infty}$ metric be a set of $n$ points in general 
position. Then, for $\beta<2$ $\beta$-skeleton $G_{\beta}(V)$ can 
be computed in $O(n^2 \log^3 n)$ time. For $\beta \geq 2$ there exists 
an $O(n \log^2 n)$ time algorithm.
\end{theorem}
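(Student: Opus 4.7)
The plan is to exploit the structural observations developed in this section to reduce every edge test to a constant number of axis-aligned box emptiness queries, and then to split into two regimes: a brute-force enumeration over pairs for $\beta<2$, and a sparse-supergraph filtering step for $\beta\geq 2$.

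For the case $\beta<2$, I would iterate over all $O(n^2)$ ordered pairs $(v_1,v_2)\in V\times V$. For each pair, the case analysis preceding the theorem (for $\beta=1$, for $1<\beta<2$, and separately for $0<\beta<1$) produces in $O(1)$ time a constant number of candidate pairs of centers $(c_1,c_2)$ for the minimal lens, indexed by the arm directions $k$ of the cross $T(v_1,v_2)$; by Lemma~\ref{intersections} this family exhausts all lenses that could witness the edge. Each candidate lens $N_\infty(v_1,v_2,\beta)$ is the intersection of two $L_\infty$-balls, i.e.\ of two axis-aligned cubes, hence itself an axis-aligned box. After a one-time preprocessing that builds a three-dimensional orthogonal range tree on $V$ in $O(n\log^2 n)$ time, each emptiness query costs $O(\log^3 n)$. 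Summing over the $O(n^2)$ pairs and the $O(1)$ candidate lenses per pair yields the claimed $O(n^2\log^3 n)$ bound.

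For $\beta\geq 2$, I would use the monotonicity of the lens in $\beta$: enlarging $\beta$ enlarges $N_\infty(v_1,v_2,\beta)$, so $G_\beta(V)$ is a subgraph of the relative neighborhood graph $RNG=G_2(V)$. By Smith's algorithm \cite{s89}, $RNG$ of a set in general position in $\mathbb{R}^3$ with $L_\infty$ can be computed in $O(n\log^2 n)$ time, and $|E(RNG)|=O(n)$. For each of these edges $(v_1,v_2)$, the lens $N_\infty(v_1,v_2,\beta)$ is (uniquely determined and) again an axis-aligned box, so I test it against $V\setminus\{v_1,v_2\}$ by one orthogonal range emptiness query, taking $O(\log^2 n)$ time with fractional cascading. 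The total cost is $O(n\log^2 n)$.

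The main obstacle is the bookkeeping for $\beta<2$: one must verify that the geometric descriptions given above — the rectangle/segment/point shapes of the intersections $C_1,C_2$, the band that appears when $v_1v_2$ is axis-parallel, and the additional configurations for $0<\beta<1$ where candidate centers sit on edges of the cuboid $D(v_1,\frac{d_\infty(v_1,v_2)}{2\beta})\cap D(v_2,\frac{d_\infty(v_1,v_2)}{2\beta})$ — collectively yield only $O(1)$ distinct axis-aligned lens boxes per pair, so that range-tree queries suffice. Once this finite-case catalogue is in hand, the running-time bounds follow directly from the standard three-dimensional range-tree machinery and from Smith's $RNG$ construction.
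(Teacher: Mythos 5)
Your proposal is correct and follows essentially the same route as the paper's own (much terser) proof: reduce each edge test to a constant number of axis-aligned box emptiness queries answered by orthogonal range search, checking all $O(n^2)$ pairs when $\beta<2$ and filtering the $O(n)$ edges of an RNG supergraph via Smith's $L_\infty$ construction when $\beta\geq 2$. The only cosmetic difference is that the paper bounds the edge count via containment in the Euclidean RNG rather than via monotonicity of the lens in $\beta$, which changes nothing substantive.
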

\begin{proof}
Since all lenses in $L_{\infty}$ metric are cuboids, range search algorithms 
\cite{s89,bk97} can be used to compute $\beta$-skeletons. Note that for $\beta<2$ it 
is necessary to check all $O(n^2)$ edges, but for each edge the minimal lens is 
uniquely defined.
For $\beta \geq 2$, since $G_{\beta}(V)$ is a subgraph of the euclidean relative 
neighborhood graph, then $G_{\beta}(V)$ has $O(n)$ edges. For each edge there is 
a constant $O(C)$ number of minimal lenses, so the lens-based $\beta$-skeleton can 
be computed in $O(n \log^2 n)$ time \cite{s89}.
\end{proof}

Used the same methods we obtain the following result.

\begin{theorem}
Let $V \in \mathbb{R}^d$ in $L_{\infty}$ metric be a set of $n$ points in general 
position. Then, for $\beta<2$ $\beta$-skeleton $G_{\beta}(V)$ can 
be computed in $O(n^2 \log^d n)$ time. For $\beta \geq 2$ there exists 
an $O(n \log^{d-1} n)$ time algorithm.
\end{theorem}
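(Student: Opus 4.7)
The plan is to mirror the argument for $d=3$ and verify that every geometric observation extends to general $d$. All unit balls in the $L_\infty$ metric on $\mathbb{R}^d$ are axis-aligned cubes, so any intersection of two such balls, and hence any lens $N_\infty(v_1,v_2,\beta)$, is an axis-aligned cuboid. This is the structural fact that makes orthogonal range searching the right tool: testing whether a lens contains a point of $V\setminus\{v_1,v_2\}$ is a single $d$-dimensional orthogonal range emptiness query.

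First I would redo the case analysis of the previous section in $\mathbb{R}^d$. By the cuboid lemma, $S(v_1,v_2)$ is a (possibly lower-dimensional) axis-aligned cuboid, and the cross $T(v_1,v_2)$ has $2^{d-1}$ arms, one per edge direction of the $d$-cube. For $\beta=1$ the intersection of $C(v_1,\tfrac12 d_\infty(v_1,v_2))$ and $C(v_2,\tfrac12 d_\infty(v_1,v_2))$ is a face of the sphere whose shape depends on how many of the differences $|x^{(i)}_1-x^{(i)}_2|$ attain $d_\infty(v_1,v_2)$; in each case one can read off $O(1)$ candidate center pairs $(c_1,c_2)$ maximizing the distance between them, which by Lemma~\ref{intersections} yields the minimal lens. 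For $1<\beta<2$ and for $\beta>2$ the same argument applied to the shifted spheres $C(v_1,\tfrac{\beta}{2}d_\infty)\cap C(v_2,|1-\tfrac{\beta}{2}|d_\infty)$ still produces $O(1)$ candidates, because internal tangency forces the intersection to lie on $O(1)$ faces of the smaller cube. For $0<\beta<1$ (or $\beta>1$ in the circle-based case) the locus of admissible centers is the intersection of two $L_\infty$-spheres of radius $d_\infty(v_1,v_2)/(2\beta)$; this is (generically) the union of $2d$ edges of the corresponding intersection cuboid, and by Lemma~\ref{intersections} the valid pairs $(c_1,c_2)$ are exactly those lying in a common arm $T_k(v_1,v_2)$, so again only $O(1)$ candidate minimal lenses arise for each edge under the general-position hypothesis.

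With the geometric characterization in hand, the algorithmic step is routine. Build a $d$-dimensional orthogonal range searching data structure on $V$ supporting emptiness queries on axis-aligned cuboids; standard range trees give $O(n\log^{d-1}n)$ preprocessing and $O(\log^d n)$ query, or $O(\log^{d-1}n)$ query with fractional cascading~\cite{s89,bk97}. For $\beta<2$ iterate over all $\binom{n}{2}$ pairs $(v_1,v_2)$; compute the $O(1)$ candidate minimal lenses and test each with one emptiness query, giving $O(n^2\log^d n)$ total time. For $\beta\ge 2$, use the inclusion $G_\beta(V)\subseteq G_2(V)=RNG(V)$, and invoke Smith's $O(n\log^{d-1}n)$ algorithm for the $L_\infty$ relative neighborhood graph in $\mathbb{R}^d$ to extract $O(n)$ candidate edges; then run one range emptiness query per candidate edge at cost $O(\log^{d-1}n)$, again totalling $O(n\log^{d-1}n)$.

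The main obstacle is the first step, in particular the dimension count for the minimal-lens candidates. One has to check that, even as $d$ grows and $T(v_1,v_2)$ acquires $2^{d-1}$ arm directions, the number of candidate center pairs $(c_1,c_2)$ maximizing $d_\infty(c_1,c_2)$ on the relevant sphere-intersection and lying antipodally with respect to $S(v_1,v_2)$ stays bounded by a constant depending only on $d$. This is where the general-position assumption enters: it rules out degenerate alignments that would otherwise collapse arms of $T(v_1,v_2)$ and produce a continuum of minimal lenses, and it makes the extremal pairs essentially unique up to the symmetry of $T(v_1,v_2)$. Once this combinatorial bound is established, the range-searching plug-in is immediate and the stated time bounds follow.
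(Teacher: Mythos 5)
Your proposal is correct and follows essentially the same route as the paper: both reduce the problem to $d$-dimensional orthogonal range emptiness queries on the $O(1)$ candidate minimal lenses per pair (which are axis-aligned cuboids in $L_\infty$), check all $O(n^2)$ pairs for $\beta<2$, and for $\beta\geq 2$ exploit the inclusion of $G_{\beta}(V)$ in a relative neighborhood graph together with Smith's $O(n\log^{d-1}n)$ construction to restrict attention to $O(n)$ candidate edges. The paper's own proof is in fact terser than yours (it simply invokes the $3$-dimensional argument), so your explicit attention to the dimension count of candidate center pairs is a reasonable elaboration rather than a deviation.
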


For $\beta' > \beta$ there is $G_{\beta'}(V) \subset G_{\beta}(V)$.
Moreover, for a pair of points $v_1,v_2$ and $i < \infty$ the following
inclusion holds $N_i(v_1,v_2,\beta) \subset N_{\infty}(v_1,v_2,\beta)$. 
Due result of Agarwal and Matousek \cite{am92} we obtain the following theorem.

\begin{theorem}
Let $V \in \mathbb{R}^3$ in $L_{\infty}$ metric be a set of $n$ points in arbitrary 
position. Then, for $\beta \geq 2$ and each $\epsilon > 0$ there exists 
an $O(n^{\frac{7}{4+\epsilon}} \log^{d} n)$ time algorithm that constructs 
$\beta$-skeleton for the set $V$.
\end{theorem}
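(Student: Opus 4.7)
The plan is to combine the two monotonicity observations stated immediately above the theorem with the Euclidean $3$-dimensional RNG algorithm of Agarwal and Matou\v{s}ek \cite{am92}. First, the containment $N_i(v_1,v_2,\beta) \subset N_\infty(v_1,v_2,\beta)$, specialized to $i = 2$, implies by contraposition that every edge of the $\beta$-skeleton in the $L_\infty$ metric is also an edge of the Euclidean $\beta$-skeleton on $V$. Combining this with the inclusion $G_{\beta'}(V) \subseteq G_\beta(V)$ for $\beta' \geq \beta$, taken at $\beta = 2$, the $L_\infty$ $\beta$-skeleton for any $\beta \geq 2$ is a subgraph of the Euclidean relative neighborhood graph on $V$ in $\mathbb{R}^3$.

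This suggests a two-phase algorithm. In phase one, I would run the Agarwal-Matou\v{s}ek procedure on $V$ to produce, in $O(n^{7/4+\epsilon})$ time, the Euclidean RNG, yielding a candidate edge set $E$ of size $O(n^{7/4+\epsilon})$ that contains every edge of the target graph. In phase two, for each candidate pair $(v_1,v_2) \in E$, I would test whether $N_\infty(v_1,v_2,\beta)$ contains any point of $V \setminus \{v_1,v_2\}$. Since $L_\infty$ balls in $\mathbb{R}^d$ are axis-aligned cubes and the lens is the intersection of two such cubes, it is itself an axis-aligned box computable in constant time from $v_1$, $v_2$, and $\beta$. The emptiness test therefore reduces to a single orthogonal range-emptiness query; using the $d$-dimensional range-tree machinery of \cite{s89}, each query costs $O(\log^d n)$ after $O(n \log^{d-1} n)$ preprocessing.

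Adding the two phases gives $O(n^{7/4+\epsilon}) + O(n^{7/4+\epsilon} \log^d n) = O(n^{7/4+\epsilon} \log^d n)$, matching the claimed bound. The main obstacle is conceptual rather than technical: one has to verify that for $\beta \geq 2$ the $L_\infty$ lens is really a single axis-aligned cuboid, and not, for instance, a union of boxes indexed by the diagonal directions of the cross $T(v_1,v_2)$, so that one orthogonal range query per candidate edge suffices. This is guaranteed by the analysis given in the preceding subsection for $\beta \geq 2$, where the sets $C_1$ and $C_2$ (or their tangency configuration for $\beta > 2$) determine the lens uniquely. Arbitrary position of the input causes no difficulty, because neither the Agarwal-Matou\v{s}ek algorithm nor orthogonal range searching requires a general-position hypothesis.
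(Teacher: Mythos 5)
Your proposal is correct and follows exactly the route the paper intends (the paper states no explicit proof, only the two monotonicity facts plus the citation of Agarwal and Matou\v{s}ek): use $N_2(v_1,v_2,2)\subset N_\infty(v_1,v_2,\beta)$ for $\beta\geq 2$ to reduce to a candidate edge set given by the Euclidean RNG, then filter each candidate with orthogonal range-emptiness queries on the axis-aligned $L_\infty$ lenses. The only nit is that for $\beta>2$ the minimal lens is not unique but there is a constant number of them (one per admissible pair of vertices of $C_1$ and $C_2$), which changes nothing asymptotically.
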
 

\section{$\beta$-skeletons in $L_{1}$ metric}
Now, let us consider the problem for a set $V$ of $n$ points in $\mathbb{R}^d$ 
with the $L_1$ metric. We will concentrate on $3$-dimensional space to better describe 
our considerations. Let $v_1=(x_1,y_1z_1), v_2=(x_1,y_2,z_2)$.

For $\beta<1$ (and for $\beta>1$ in case of circle-based $\beta$-skeletons) 
and for points $v_1=(x_1,y_1z_1), v_2=(x_1,y_2,z_2)$ the shape of the intersection 
$C(v_1,\frac{d_{1}(v_1,v_2)}{2 \beta})$ and $C(v_2,\frac{d_{1}(v_1,v_2)}{2 \beta})$ 
depends on whether $|x_1-x_2|=|y_1-y_2|$. If $|x_1-x_2|\neq |y_1-y_2|$ this 
intersection is a parallelogram 
%that lies either in the plane $\{x=\frac{x_1+x_2}{2}\}$ 
%or in the plane  $\{y=\frac{y_1+y_2}{2}\}$. Since diagonals of this parallelogram 
%intersect in the middle point of the segment $v_1v_2$, the intersection of it and of 
which common part with the cross $T(v_1,v_2)$ consists of two pairs of parallel segments. 
For each pair two most distant points define the minimal lens.  

If $|x_1-x_2|=|y_1-y_2| \neq 0$ then set 
$C(v_1,\frac{d_{1}(v_1,v_2)}{2 \beta}) \cap C(v_2,\frac{d_{1}(v_1,v_2)}{2 \beta})$ 
consists of two parallel segments. Cross $T(v_1,v_2)$ intersects each of those segments 
and a pair of most distant centers, one from one segment and one from another, 
defines the minimal lens.  

In the last case, when $|x_1-x_2|=|y_1-y_2|=0$ the intersection is a square. 
%in the plane $\{z=\frac{z_1+z_2}{2}\}$. Although 
Set $T(v_1,v_2)$ intersects it in four points but the lenses for two pairs of centers 
defined this way are identical. 

For $\beta \in [1,2)$ there are also three cases to consider. If segment 
$v_1v_2$ is parallel to some coordinate axis then the intersection of spheres 
$C(v_1,\frac{\beta d_{1}(v_1,v_2)}{2})$ and $C(v_2,\frac{(\beta-2) d_{1}(v_1,v_2)}{2})$ 
(spheres $C(v_2,\frac{\beta d_{1}(v_1,v_2)}{2})$ and 
$C(v_1,\frac{(\beta-2) d_{1}(v_1,v_2)}{2})$ respectively) is just one point. 
From the definition, it lies in the set $S(v_1,v_2)$ and the lens for points $v_1,v_2$ 
is uniquely defined.

If $|z_1-z_2|=0$ but segment $v_1v_2$ is not parallel to any coordinate axis then 
the intersection is a single segment in the plane $\{z=z_1\}$. This segment is fully 
contained in the cuboid $S(v_1,v_2)$. Since the intersection of spheres 
$C(v_2,\frac{\beta d_{1}(v_1,v_2)}{2})$ and $C(v_1,\frac{(\beta-2) d_{1}(v_1,v_2)}{2})$ 
is a segment parallel to this one, that also lies in the cuboid $S(v_1,v_2)$, 
the pair of points $c_1$ and $c_2$, one from one segment, one from the other, 
define the minimal lens if the euclidean distance between them is maximal. 
Since the segments described above are symmetrical and parallel it is only necessary 
to consider their endpoints.

If $|z_1-z_2| \neq 0$ and $v_1v_2$ is not parallel to a coordinate axis then 
the intersection 
$C(v_1,\frac{\beta d_{1}(v_1,v_2)}{2}) \cap C(v_2,\frac{(\beta-2) d_{1}(v_1,v_2)}{2})$ 
is a hexagon with edges pairwise parallel.  Note that since for any point $v$ in this 
hexagon it is true that 
$d_1(v,v_1)+d_1(v,v_2)=\frac{\beta d_{1}(v_1,v_2)}{2}+\frac{(\beta-2) d_{1}(v_1,v_2)}{2}
=d_1(v_1,v_2)$, so this polygon is contained in the set $S(v_1,v_2)$. The intersection 
of spheres $C(v_2,\frac{\beta d_{1}(v_1,v_2)}{2})$ and 
$C(v_1,\frac{(\beta-2) d_{1}(v_1,v_2)}{2})$ is also a symmetrical hexagon, parallel 
to the previous one, that is also in $S(v_1,v_2)$. A pair of points $c_1,c_2$ one 
from one hexagon, second from the other such that the euclidean distance between them 
is maximal define the minimal lens. There can be up to six pairs like that.
%bo przecina się np scianka gorna numer 1, z dolna numer trzy, wiec na pewno gorne krawedzie to z=z_1 i z=z_2 i miesci sie to przeciecie w odpowiedniej cwiartce przestrzeni (patrz rzut na plaszczyzne xy)  

For $\beta=2$ the centers of the discs defining the lenses are uniquely defined.

For $\beta>2$ let us denote by $C_1$ ($C_2$ respectively) the intersection 
of spheres $C(v_1,\frac{\beta d_{1}(v_1,v_2)}{2})$ and 
$C(v_2,\frac{(\beta-2) d_{1}(v_1,v_2)}{2})$ ($C(v_2,\frac{\beta d_{1}(v_1,v_2)}{2})$ 
and $C(v_1,\frac{(\beta-2) d_{1}(v_1,v_2)}{2})$ respectively). $C_1$ can be equal 
to either one face of the sphere with the smaller radius or to a sum of two or even 
four such faces. 
If points $v_1$ and $v_2$ have all coordinates different 
($|x_1-x_2|,|y_1-y_2|,|z_1-z_2| \neq 0$) then this intersection is one of the faces 
from sphere with radius $\frac{\beta d_{1}(v_1,v_2)}{2}$. Set $T(v_1,v_2)$ intersects 
this faces in three points (in its vertices). Each such point belongs to a different 
arm of $T(v_1,v_2)$. In set $C_2$ there are also only three points, so for each pair 
of opposing arms exists a pair $c_1,c_2$ that defines a lens.

If $v_1$ and $v_2$ have one of the coordinates equal then $C_1$ is a sum of two 
respective faces. The cross intersects it in vertices of those two triangles. 
Therefore, there are $4$ pairs of lenses defined for $v_1,v_2$.

In the last case, when segment $v_1v_2$ is parallel to one of the coordinates axis, 
the intersection 
$C(v_1,\frac{\beta d_{1}(v_1,v_2)}{2}) \cap C(v_2,\frac{(\beta-2) d_{1}(v_1,v_2)}{2})$ 
($C(v_2,\frac{\beta d_{1}(v_1,v_2)}{2}) \cap C(v_1,\frac{(\beta-2) d_{1}(v_1,v_2)}{2})$ 
respectively) is a sum of four neighbouring faces. Only vertices of those faces belong 
to arms of the set $T(v_1,v_2)$. There are five points in set $C_1$ ($C_2$ respectively)
and therefore there are $5$ lenses defined for $v_1,v_2$. Note that in each case 
the pair $c_1 \in C_1,c_2 \in C_2$ such that the distance between them is maximal 
defines the smallest lens.

The algorithm constructing $\beta$-skeleton for a given parameter $\beta$
is similar to the previous one. We use range trees to test emptiness of lenses. 

However, in this case analyzed regions are not cuboids. Therefore, we modify 
range tree structure by adding two new coordinates corresponding to each pair directions
of sphere faces in $L_1$. In this way we obtain a finite number of data
structures. We test intersections of spheres analyzing pairs of their parts each 
of which is limited by planes parallel to the coordinate axes and one plane parallel
to face of the sphere.

Based on the observations in this section the following theorem can be proved:

\begin{theorem}
Let $V \in \mathbb{R}^d$ in $L_1$ metric be a set of $n$ points in arbitrary 
position. Then, $\beta$-skeleton $G_{\beta}(V)$ can 
be computed in $O(n^2 \log^{d+2} n)$ time. 
%For $\beta \geq 2$ there exists 
%an $O(n \log^{d-1} n)$ time algorithm that constructs $\beta$-skeleton for the set $V$.
\end{theorem}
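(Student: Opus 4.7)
The plan is to follow the same pattern as the $L_{\infty}$ algorithms: enumerate all $O(n^2)$ candidate edges, reduce each edge to a constant number of minimal-lens emptiness tests using the case analysis of this section, and answer each test by orthogonal range searching in a cleverly modified data structure. The main novelty compared with $L_{\infty}$ is that $L_1$-lenses are not axis-aligned cuboids, so the range tree has to be adapted.

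First I would iterate over every ordered pair $(v_1,v_2)\in V\times V$ and, using the explicit descriptions derived above (parallelogram/segment/square for $\beta<1$, hexagonal cross-sections for $\beta\in[1,2)$, the uniquely defined configuration for $\beta=2$, and the one, two, or four triangular facets arising for $\beta>2$), identify in $O(1)$ time the candidate centre pairs $c_1\in C_1,c_2\in C_2$ that realise a minimal lens. The case analysis shows that at most a constant number (bounded by six in $\mathbb{R}^3$, and by a function of $d$ alone in general) of candidate lenses need to be tested per pair, so the total number of emptiness queries is $O(n^2)$.

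The principal obstacle is precisely what the text flags: each $L_1$-lens is the intersection of two $L_1$-balls, so besides axis-parallel constraints coming from the enclosing cuboid $S(v_1,v_2)$, its boundary contains hyperplanes whose normals lie in the diagonal directions $(\pm 1,\dots,\pm 1)$ that bound the facets of an $L_1$-ball. To handle such regions with a standard range tree I would lift the input by appending, for each sign pattern $\varepsilon\in\{\pm 1\}^d$, the extra coordinate $\sum_i \varepsilon_i p_i$; after the lift every diagonal half-space becomes axis-parallel in one of the new coordinates. For any fixed lens only one pair of opposite diagonal directions is active, so exactly two of these extra coordinates participate per query. Building one $(d+2)$-dimensional multilevel range tree per pair of diagonal directions (a constant number for fixed $d$) and querying the appropriate one then reduces lens-emptiness to an orthogonal range-emptiness query in $d+2$ dimensions, answerable in $O(\log^{d+2} n)$ time after $O(n\log^{d+1} n)$ preprocessing.

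Multiplying the $O(n^2)$ candidate tests by the $O(\log^{d+2} n)$ query time gives the claimed $O(n^2\log^{d+2} n)$ bound, with preprocessing absorbed. I expect the most delicate step to be not the data structure itself, which is standard multilevel range searching once the lifting has been performed, but the general-$d$ extension of the geometric case analysis: one must verify that in every dimension the number of minimal-lens candidates per pair and the number of diagonal facet orientations active per lens remain bounded by functions of $d$, so that the $O(1)$ per-pair work and the constant number of auxiliary range trees really do not depend on $n$.
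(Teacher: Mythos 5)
Your overall strategy --- $O(n^2)$ candidate pairs, a constant number of minimal lenses per pair extracted from the case analysis, and orthogonal range-emptiness queries in a range tree augmented with diagonal coordinates --- is exactly the paper's; the paper only sketches it, saying it modifies the range tree ``by adding two new coordinates corresponding to each pair [of] directions of sphere faces in $L_1$,'' yielding a finite family of $(d+2)$-dimensional structures. However, one step of your argument is wrong as stated, and it is precisely the step the paper does make explicit. You claim that ``for any fixed lens only one pair of opposite diagonal directions is active, so exactly two of these extra coordinates participate per query,'' and conclude that lens-emptiness is a single orthogonal range query in $d+2$ dimensions. This is false: an $L_1$-lens is an intersection of two cross-polytopes, and its boundary generically contains facets in all $2^{d-1}$ diagonal orientations --- already for $\beta=1$, where the lens degenerates to a single $L_1$-ball with $2^d$ facets. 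In any one $(d+2)$-coordinate lift the lens is therefore not an orthogonal box, and a single box query would either test a strictly larger region (producing false negatives for edges) or force you to carry $d+2^{d-1}$ lifted coordinates, which spoils the $\log^{d+2} n$ bound for $d\ge 3$.

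The repair is the decomposition the paper states: cut each ball into its $2^d$ orthant pieces relative to its centre, so that each piece is bounded by $d$ axis-parallel hyperplanes and exactly one facet hyperplane; the lens is then covered by at most $4^d$ pairwise intersections of such pieces, each of which \emph{is} an orthogonal box in the lift determined by the two active facet normals $\varepsilon,\varepsilon'$ (which need not be opposite --- so one genuinely needs a tree per unordered pair of diagonal directions, as your construction already provides, not only per antipodal pair). With that decomposition each lens costs $O(4^d\log^{d+2} n)$, and the claimed total follows. The rest of your proposal, including your own caveat that the constant bound on candidate centre pairs must be verified in general dimension $d$, matches the paper, which is itself no more detailed on that point.
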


%\textbf{Brakuje algorytmow, rzeczy wymienionych wczesniej i podsumowania}
%o przecięciu z krzyżakiem

\section{Conclusions and open problems}

%We have described algorithms for computing generalized $\beta$-skeletons for a set 
%of $n$ points $S$ in $\mathbb{R}^3$ with $l_1$ or $l_{\infty}$ metric. 
We have described shapes of lenses for $\beta$-skeletons in $\mathbb{R}^d$ space 
with $L_1$ and $L_{\infty}$ metric. The definition we used is based on the distance 
criterion and on the idea of choosing pairs of antipodal points by computing the sets 
of shortest paths between them. Since for each pair 
$v_1,v_2 \in V \subseteq \mathbb{R}^d$ and for every $\beta$ there is no more then 
a constant number of minimal lenses (and this constant number does not depend 
on the number of points in the set $V$) it is possible to use range search algorithm 
to compute the $\beta$-skeleton. 
%It is also possible to describe $\beta$-skeletons 
%in $(\mathbb{R}^d, L_{1})$ and $(\mathbb{R}^d, L_{\infty})$ space (for $d>3$) 
%in a similar way.
Since in euclidean metric in $\mathbb{R}^3$ a size of Gabriel Graph 
is $O(n^{\frac{3}{2}})$ \cite{kl10}, we think that the algorithm for 
$1 \leq \beta \leq 2$ in $L_{\infty}$ metric presented in this work could be 
improved. Moreover, algorithms in $L_1$ seem to be not optimal.

%\noindent
%{\bf Acknowledgements}\\
%\noindent
%The authors would like to thank Jerzy W. Jaromczyk for important comments.

\bibliographystyle{abbrv}

\end{document}